\documentclass[abstract=true, DIV=16, parskip=half]{scrartcl}

\usepackage[noEnd,commentColor=black]{algpseudocodex}
\usepackage{amsmath}
\usepackage{amssymb}
\usepackage{amsthm}
\usepackage[mathscr]{euscript}
\usepackage[shortlabels]{enumitem}
\usepackage{graphicx} 
\usepackage{subcaption}
\usepackage{thmtools}
\usepackage{todonotes}
\usepackage{xcolor}
\usepackage{microtype}

\makeatletter
\def\@fnsymbol#1{\ensuremath{\ifcase#1\or \!\;\or \!\;\or \ddagger\or
   \mathsection\or \mathparagraph\or \|\or **\or \dagger\dagger
   \or \ddagger\ddagger \else\@ctrerr\fi}}

\makeatother

\usepackage[normalem]{ulem}
\usepackage{tcolorbox}

\usepackage{upgreek}
\usepackage[colorlinks]{hyperref}
\usepackage[capitalize]{cleveref}

\crefname{equation}{eq.}{eqs.} 
\crefname{enumi}{}{} 

\crefname{icase}{case}{cases}
\crefname{ipart}{part}{parts}
\crefname{iprop}{property}{properties}
\crefname{iinv}{invariant}{invariants}

\crefformat{section}{\S\,#2#1#3}
\crefformat{subsection}{\S\,#2#1#3}
\crefrangeformat{section}{\S\S\,#3#1#4--#5#2#6}
\crefmultiformat{section}{\S\S\,#2#1#3}{,~#2#1#3}{,~#2#1#3}{,~#2#1#3}

\usepackage{authblk}
\usepackage[normalem]{ulem}

\usepackage{tikz}
\usetikzlibrary{calc}

\newcommand{\poly}{\mathrm{poly}}

\newcommand{\fO}{\mathcal{O}}

\renewcommand{\alpha}{\upalpha}

\newcommand{\connected}[1]{\def\temp{#1}\ifx\temp\empty\sim\else\overset{#1}{\sim}\fi}

\newtheorem{theorem}{Theorem}[section]
\newtheorem{lemma}[theorem]{Lemma}

\theoremstyle{definition}

\setparsizes{0pt}{0.25\baselineskip plus .25\baselineskip}{1em plus 1fil}

\author[]{L\'{a}szl\'{o} Kozma}

\affil[]{Dresden University of Technology}

\title{Faster \#PPM from faster \#3SAT}

\date{}

\begin{document}

\maketitle

\vspace{-0.4in}
\begin{abstract}
The counting variant (\#PPM) of the \emph{permutation pattern matching} (PPM) problem asks for the number of occurrences of a permutation pattern in a host permutation of length $n$. The problem admits faster algorithms than the trivial $\fO^*(2^n)$ time enumeration; the best known running time, for both PPM and \#PPM, is $\fO(1.415^n)$ (Gawrychowski and Rzepecki). It is also known that PPM is ETH-hard: an algorithm with running time $2^{o(n/\log{n})}$ would imply a subexponential algorithm for 3SAT (Jel\'{i}nek, Opler, and Pek\'{a}rek).

We give a reduction in the opposite direction, 
showing that an algorithm for counting satisfying assignments of a 3-CNF formula (\#3SAT) in time $2^{\fO({n}/{f(n)})}$, where $n$ is the number of variables, yields an algorithm for \#PPM with running time $2^{\fO({n \sqrt{\log{f(n)}/{f(n)}}})}$. In particular, the recently claimed refutation of ETH by OpenAI implies a \#3SAT algorithm with an improvement factor $f(n) \in \Omega(\sqrt{\log{n}}/\log\log{n})$ in the exponent; this yields, through our reduction, a subexponential algorithm for \#PPM. 
\end{abstract}

 \textbf{Note:} This is a technical report not intended for conference or journal publication. AI tools were used interactively to derive and check essential parts of the result; the writeup is by the author.

\section{Introduction}\label{sec1}

In this short note we show that a subexponential algorithm for \#3SAT (counting satisfying assignments of a $3$-CNF formula) implies a subexponential algorithm for \#PPM (counting occurrences of a permutation pattern in a host permutation). Previous work established hardness results for PPM and \#PPM under the exponential-time hypothesis (ETH), i.e., through reductions from 3SAT to PPM. Here, we give a reduction in the opposite direction.

The motivation for our study is the recent claim~\cite{openai} of a subexponential algorithm for \#3SAT (which, in particular, refutes ETH). We find the change of perspective from ``reducing for hardness'' to ``reducing for easiness'' refreshing, allowing the re-use of techniques and tricks familiar from practical SAT solving. We anticipate that results of this flavor will be established for many problems, and we see \#PPM as an interesting and simple, yet not entirely trivial, case study.

A permutation $\tau$ of size $n$ is a bijection $\tau: [n] \rightarrow [n]$, viewed alternatively as the sequence $\tau(1) \tau(2) \cdots \tau(n)$. We refer to $\tau(i)$ as the \emph{value} corresponding to \emph{index} $i$.

Given permutations $\tau$ and $\pi$ of size $n$ and $k$ respectively, with $k\leq n$, an \emph{occurrence} of $\pi$ in $\tau$ is a set of $k$ indices $S = \{u_1 < u_2 < \cdots < u_k\} \subseteq [n]$, whose corresponding values in $\tau$ are in the same relative order as the pattern $\pi$, i.e. $$\tau(u_i) < \tau(u_j) \iff \pi(i) < \pi(j) \quad \mbox{for all~~} i,j \in [k].$$

For example, $S = \{3,5,6\}$ is an occurrence of pattern $\pi =3\,1\,2$ in $\tau = 1\,3\,6\,5\,2\,4$, since the values $6,2,4$ are ordered like $3,1,2$. Another occurrence of the same pattern is $S = \{4,5,6\}$.

Let $\mathrm{occ}(\tau,\pi)$ denote the number of occurrences of $\pi$ in $\tau$. Permutation pattern matching (PPM) asks whether $\mathrm{occ}(\tau,\pi) >0$ and its counting version (\#PPM) asks to compute $\mathrm{occ}(\tau,\pi)$.

Both PPM and \#PPM are well-studied problems. A naive enumeration algorithm (check all subsets $S \subseteq [n]$ of size $k$) solves both, running in time $\fO^*(2^n)$. A series of improvements~\cite{Bruner,BKM,GR} led to the current best runtime of $\fO(1.415^n)$ for both problems, found by Gawrychowski and Rzepecki.

In the setting where the pattern size $k$ is separately considered, an algorithm with runtime $n^{k/4 + o(k)}$ was given by Berendsohn, Kozma, and Marx~\cite{BKM} for both PPM and \#PPM. For PPM, an algorithm with runtime of the form $n \cdot 2^{\fO(k^2 \log{k})}$ was found by Guillemot and Marx~\cite{GM}, with a subsequent improvement to $n \cdot 2^{\fO(k^2)}$ by Fox~\cite{Fox}. These latter results established the FPT status of PPM; note that they do not transfer to \#PPM. 

On the complexity side, Jel\'{i}nek, Opler, and Pek\'{a}rek~\cite{JOP} showed that, assuming the exponential-time hypothesis (ETH),  PPM does not admit a runtime of $2^{o(n/\log{n})}$ (the existence of such an algorithm would imply solving 3SAT in time $2^{o(n)}$). Berendsohn, Kozma, and Marx~\cite{BKM} ruled out, also under the ETH, a runtime of the form $g(k) \cdot n^{o(k/\log{k})}$ for \#PPM.

In this paper we study the complexity of \#PPM assuming the ``easiness'' of \#3SAT. 
We establish that a subexponential algorithm for \#3SAT implies a subexponential algorithm for \#PPM (and consequently for PPM), improving the previous exponential bounds. We focus on running times expressed in terms of $n$ only.\footnote{We note that the bounds parameterized by $k$ mentioned earlier remain preferable to the new bounds derived here, for smaller values of $k$, e.g., when $k \in \fO(n/\log^2{n})$.}

\begin{theorem}\label{thma}
An algorithm for \#3SAT with running time $2^{\fO(n/f(n))}$ 
implies an algorithm for \#PPM with running time $2^{\fO{(n \sqrt{\log{f(n)}/f(n)})}}$.
\end{theorem}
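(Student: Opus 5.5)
The plan is to reduce \#PPM on $(\tau,\pi)$ to a sum of $2^{o(n)}$ instances of \#3SAT, each with $\fO(nb)$ variables, where $b$ is a block size fixed at the end. The reduction splits into a coarse enumeration step and a fine step encoded as a formula.

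\textbf{Coarse structure.} Partition the index set $[n]$ of $\tau$ into $m=\lceil n/b\rceil$ \emph{blocks} of $b$ consecutive indices. Partition the value set $[n]$ into $m$ \emph{bands} of $b$ consecutive values in the same way. Every occurrence $S$ determines two vectors $(c_1,\dots,c_m)$ and $(d_1,\dots,d_m)$: $c_t=|S\cap \text{block } t|$ and $d_t$ is the number of values of $S$ in band $t$. Since $u_1<\dots<u_k$, the vector $c$ fixes which pattern positions fall in which block: block $t$ receives positions $c_1+\dots+c_{t-1}+1,\dots,c_1+\dots+c_t$. Likewise $d$ fixes, through the values $\pi(j)$, which pattern element falls in which band. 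We enumerate all pairs $(c,d)$ with entries in $\{0,\dots,b\}$ summing to $k$. There are at most $(b+1)^{2m}=2^{\fO((n/b)\log b)}$ such pairs. Each occurrence is counted under exactly one pair, so $\mathrm{occ}(\tau,\pi)$ is the sum of the counts over all pairs.

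\textbf{Fine structure as a \#3SAT instance.} Fix $(c,d)$. Introduce a variable $x_i$ for each index $i$, meaning that $i$ is selected. For each block, add auxiliary unary-counter variables $p_{i,r}$ with intended meaning ``exactly $r$ indices are selected among the indices of $i$'s block up to $i$'', for $r\le b$. For each band, add analogous variables $q_{v,s}$ ranging over values in increasing order. These are defined by $p_{i,r}\leftrightarrow (p_{i-1,r}\wedge\neg x_i)\vee(p_{i-1,r-1}\wedge x_i)$. This is written in 3-CNF by standard Tseitin gadgets with $\fO(1)$ extra variables per gate, and every extra variable is functionally determined by the $x$'s, so the number of models is preserved.

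The constraints are then as follows. The final counter of block $t$ must equal $c_t$, and that of band $t$ must equal $d_t$. For each $i$ with value $v=\tau(i)$ and each $r,s\le b$, the triple $x_i\wedge p_{i,r}\wedge q_{v,s}$ must be consistent. Concretely, pattern position $j=c_1+\dots+c_{t-1}+r$ (for $i$ in block $t$) must be the pattern element of rank $s$ within $v$'s band as prescribed by $d$. If it is not, add the 3-clause $(\neg x_i\vee\neg p_{i,r}\vee\neg q_{v,s})$.

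I would then verify that the models correspond bijectively to occurrences whose profile is $(c,d)$. The map $u_j\mapsto j$ respects index order: across blocks this holds by the block assignment, and within a block by the rank $r$. It respects value order: across bands because the band assignment from $d$ is monotone in $\pi$-values, and within a band by the rank $s$. The formula has $N=\fO(nb^2)$ variables, or $\fO(nb)$ with more careful counters. I expect this correctness check to be the main obstacle: making sure that exact-count constraints plus within-block and within-band ranks suffice, and that no double counting arises from auxiliary variables.

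\textbf{Balancing.} The total time is $2^{\fO((n/b)\log b)}\cdot 2^{\fO(N/f(N))}$. Assume, as is standard, that $f$ is nondecreasing and satisfies $f(\poly(n))=\Theta(f(n))$. With $N\approx nb^2$ the second factor is $2^{\fO(nb^2/f(n))}$. Setting $b=\sqrt{f(n)/\log f(n)}$ makes $b^2/f=1/\log f$ and $(\log b)/b=\Theta(\sqrt{\log f/f})$. The first term then gives $2^{\fO(n\sqrt{\log f(n)/f(n)})}$ and the second gives $2^{\fO(n/\log f(n))}$.

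This mismatch shows that quadratic-size counters are too costly. I would therefore aim for $N=\fO(nb)$, using counters that cost $\fO(1)$ amortized variables per index and rank. For example, encode ranks only for selected elements via order constraints between pairs inside a block or band, which gives $\fO(b)$ pairs per element. Then $nb/f$ and $(n/b)\log b$ balance at exactly $b=\sqrt{f/\log f}$, yielding the claimed $2^{\fO(n\sqrt{\log f(n)/f(n)})}$. Achieving a linear-in-$b$ encoding is the step I would attack first if the straightforward counters fail.
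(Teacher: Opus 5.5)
Your overall scheme matches the paper's: enumerate the at most $(b+1)^{2\lceil n/b\rceil}$ block/band profiles. For each profile, count models of a model-preserving 3-CNF that checks the profile and checks, for every selected index, that $\pi$ sends its global index rank (offset plus local rank) to its global value rank. Your one-hot counters $p_{i,r},q_{v,s}$ with forbidden-triple clauses are a correct substitute for the paper's binary $L$-bit counters, binary decoders and the disjunction $\operatorname{match}(p)$, and a somewhat leaner one. The proposal still does not prove the theorem as written. The final accounting is wrong, and as a result you abandon a construction that already works for one you never specify. Your formula does not have $\fO(nb^2)$ variables. There are at most $2n(b+1)$ counter variables. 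Each is defined by the multiplexer $p_{i,r}\leftrightarrow(x_i\wedge p_{i-1,r-1})\vee(\neg x_i\wedge p_{i-1,r})$, which takes four 3-clauses and no auxiliary variables; even with your Tseitin gadgets it is $\fO(1)$ variables per pair $(i,r)$. What is $\fO(nb^2)$ is the number of forbidden-triple \emph{clauses}. These add no variables, and since the \#3SAT bound is in terms of variables, they only affect the $\poly(n)$ factor. They can even be reduced to $\fO(nb)$ clauses of the form $(\neg x_i\vee\neg p_{i,r}\vee q_{\tau(i),s^\ast})$, where $s^\ast$ is the unique admissible band rank; this uses that $q_{\tau(i),\cdot}$ is one-hot. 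So $N=\fO(nb)$ already holds for your first encoding. The deferred ``pairwise order constraints'' alternative, which does not say how ranks would be obtained, is unnecessary and is currently the hole in your argument.

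The second error is in the balancing. With $b=\sqrt{f/\log f}$ you get $(\log b)/b=\Theta(\log f)\cdot\sqrt{\log f/f}$, not $\Theta(\sqrt{\log f/f})$; the factor $\log b$ was dropped. The enumeration term would then be $\Theta(n(\log f)^{3/2}/\sqrt{f})$, so the two terms differ by a $\Theta(\log^2 f)$ factor and do not balance. Setting $nb/f=(n/b)\log b$ gives $b^2=f\log b$, i.e.\ $b=\Theta(\sqrt{f\log f})$. With the paper's choice $b=\lceil\sqrt{f(n)\log f(n)}\rceil$, both terms are $\Theta(n\sqrt{\log f(n)/f(n)})$. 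Your assumption $f(\poly(n))=\Theta(f(n))$ is also unneeded. Since $f$ is non-decreasing and $N\ge n$, we have $N/f(N)\le N/f(n)=\fO(nb/f(n))$. The $\poly(n)$ overhead is absorbed because $2\le f(n)\le n$ gives $n\sqrt{\log f(n)/f(n)}\ge\sqrt{n}$.
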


Here $f$ is assumed to be a non-decreasing, efficiently computable function satisfying
$2\le f(n)\le n$ for all sufficiently large $n$.
In particular, the claimed breakthrough for \#3SAT~\cite{openai} implies a function $f(n) \in \Omega( \sqrt{\log{n}}/\log\log{n} )$ in the exponent of the running time. This yields (assuming that the result holds), an algorithm for \#PPM with runtime $2^{\fO({n \log{\log{n}}/{(\log{n})}^{1/4})}}$. 

\paragraph{Idea of the reduction.} The natural idea is to encode the set $S$ of indices of an occurrence as a bitvector of length $n$, and design a circuit (eventually converted to a $3$-CNF formula) that verifies whether $S$ indeed describes a valid occurrence of $\pi$ in $\tau$. A straightforward implementation would do a pass through the text permutation $\tau$, counting for each index $i$ that is in $S$, the number of values smaller than the current value  $\tau(i)$, corresponding to already seen indices from $S$; these counts should then be compared to those of the pattern permutation $\pi$, to verify the match. This approach involves, however, operations and comparisons with possibly $\log{n}$-bit numbers, resulting in a circuit with $\Omega(n\log{n})$ variables. To take advantage of a (slightly) subexponential \#3SAT algorithm, we need a circuit of size close to linear.

To get around this issue, we split $[n]$ into blocks of size $b$. Both the space of indices and the space of values are partitioned into such blocks, and we ``guess'' for an occurrence of $\pi$ in $\tau$ how many indices $i \in S$ fall into each index-block and how many corresponding values $\tau(i)$ fall into each value-block.

Assuming these guesses are correct, verifying whether a given occurrence set $S$ (represented as a length $n$ bitvector) is correct becomes easier. Indeed, one only needs to check whether each index- and value-block has the correct number of items selected, and whether the selected indices and values indeed describe the pattern permutation $\pi$. 
For the latter task, we can aggregate the block counts offline, and the Boolean circuit only needs to work with local counts inside blocks, represented by $\fO(\log b)$ bits. A further optimization allows parts of the circuit to be reused for multiple equality tests. This gives $\fO(b)$ gates per index, for a final 3-CNF formula of size $\fO(nb)$.

For a suitably chosen value $b \approx \sqrt{f(n)}$, both the guessing (iterating through all feasible block-counts) and the \#3SAT subroutine take similar subexponential time, yielding the overall speedup. 

We describe the detailed reduction (proving Theorem~\ref{thma}) in the next section.

\section{Reduction from \#PPM to \#3SAT}

\subparagraph{Set-up.}\label{sec:ranks}

Consider a set $S\subseteq[n]$ of indices for a potential occurrence of $\pi$ in $\tau$. We define the \emph{index-rank} and \emph{value-rank} of every element $p$ in $S$, denoted $r_{\mathrm{i}}(p)$ and $r_{\mathrm{v}}(p)$, as follows:
\[
  r_{\mathrm{i}}(p)=|\{q\in S:q\le p\}|,\qquad r_{\mathrm{v}}(p)=|\{q\in S:\tau(q)\le\tau(p)\}|.
\]
Intuitively, $p$ is the $r_{\mathrm{i}}(p)$-th smallest selected index, and $\tau(p)$ is the $r_{\mathrm{v}}(p)$-th smallest selected value. 
In the earlier example, the indices $3,5,6$ in $S$ have index-ranks $1,2,3$ and value ranks $3,1,2$, which is the pattern $\pi$ itself. This holds in general, as shown in the next lemma. 
 
\begin{lemma}\label{lem:ranks}
A set $S\subseteq[n]$ with $|S|=k$ is an occurrence of $\pi$ if and only if $\pi(r_{\mathrm{i}}(p))=r_{\mathrm{v}}(p)$ for every $p\in S$.
\end{lemma}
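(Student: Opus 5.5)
Write $S=\{u_1<u_2<\cdots<u_k\}$. The plan is to translate the definition of occurrence into a statement about the two rank maps. By definition, $r_{\mathrm{i}}(u_j)=j$ for every $j$, so $r_{\mathrm{i}}$ is the order-preserving bijection from $S$ to $[k]$. Similarly, since $\tau$ is injective, the values $\tau(u_1),\dots,\tau(u_k)$ are distinct. Hence $r_{\mathrm{v}}$ is also a bijection from $S$ to $[k]$, and it is strictly order-preserving with respect to $\tau$-values: $\tau(p)<\tau(q)\iff r_{\mathrm{v}}(p)<r_{\mathrm{v}}(q)$ for $p,q\in S$. I will record both facts first, since each direction of the equivalence reduces to them.

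For the ``if'' direction, assume $\pi(r_{\mathrm{i}}(p))=r_{\mathrm{v}}(p)$ for all $p\in S$, i.e.\ $\pi(j)=r_{\mathrm{v}}(u_j)$. Then for $i,j\in[k]$,
\[
\tau(u_i)<\tau(u_j)\iff r_{\mathrm{v}}(u_i)<r_{\mathrm{v}}(u_j)\iff \pi(i)<\pi(j),
\]
which is exactly the occurrence condition.

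For the ``only if'' direction, assume $S$ is an occurrence. Define $\sigma(j)=r_{\mathrm{v}}(u_j)$; this is a permutation of $[k]$. By the order property of $r_{\mathrm{v}}$ and the occurrence condition, $\sigma(i)<\sigma(j)\iff\pi(i)<\pi(j)$ for all $i,j$. So $\sigma\circ\pi^{-1}$ is a strictly increasing bijection $[k]\to[k]$, hence the identity, and therefore $\sigma=\pi$. This gives $\pi(r_{\mathrm{i}}(u_j))=\pi(j)=r_{\mathrm{v}}(u_j)$. Alternatively, one can compute directly: $r_{\mathrm{v}}(u_j)=|\{i:\tau(u_i)\le\tau(u_j)\}|=|\{i:\pi(i)\le\pi(j)\}|=\pi(j)$, where the last equality holds because $\pi$ is a permutation of $[k]$. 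This direct count is probably the cleanest route.

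There is no real obstacle here. The only points that need care are that injectivity of $\tau$ is what makes $r_{\mathrm{v}}$ a bijection onto $[k]$ (so ties cannot occur), and that the hypothesis $|S|=k$ is what makes the ranks range over exactly $[k]$, the domain of $\pi$.
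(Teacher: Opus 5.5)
Your proposal is correct and follows essentially the same route as the paper: list $S=\{u_1<\dots<u_k\}$, observe $r_{\mathrm{i}}(u_j)=j$ and that $r_{\mathrm{v}}$ ranks the selected indices by their $\tau$-values, then run both directions through $\tau(u_i)<\tau(u_j)\iff r_{\mathrm{v}}(u_i)<r_{\mathrm{v}}(u_j)$. The only difference is that you spell out the step the paper states in one phrase (that values ordered like $\pi$ force $r_{\mathrm{v}}(u_j)=\pi(j)$), and your direct count $|\{i:\pi(i)\le\pi(j)\}|=\pi(j)$ does this cleanly.
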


\begin{proof}
Write $S=\{u_1<\dots<u_k\}$, so the selected indices up to $u_i$ are exactly $u_1,\dots,u_i$, with $r_{\mathrm{i}}(u_i)=i$.
Moreover, $r_{\mathrm{v}}$ numbers the selected indices $1,\dots,k$ in increasing order of their values: for all $p,q\in S$, we have $r_{\mathrm{v}}(p)<r_{\mathrm{v}}(q)$ if and only if $\tau(p)<\tau(q)$. If $S$ is an occurrence, the values $\tau(u_i)$ are ordered like the $\pi(i)$, so $r_{\mathrm{v}}(u_i)=\pi(i)$. Conversely, if $r_{\mathrm{v}}(u_i)=\pi(i)$ for all $i$, then $\tau(u_i)<\tau(u_{j})\Leftrightarrow r_{\mathrm{v}}(u_i)<r_{\mathrm{v}}(u_{j})\Leftrightarrow\pi(i)<\pi(j)$, so $S$ is an occurrence. 
\end{proof}

By Lemma~\ref{lem:ranks}, it suffices to encode each index $i \in [n]$ with a Boolean variable $x_i$, setting $x_i = 1$ exactly when $i \in S$. Then, a circuit can compute the ranks of selected indices in binary and verify $\pi(r_{\mathrm{i}}(i))=r_{\mathrm{v}}(i)$, for all $i \in S$. If $k \in \Omega(n)$, however, explicitly representing these ranks already requires $\Omega(n\log n)$ variables, leading to no improvement through a (slightly) subexponential \#3SAT solver. 

\paragraph{Blocks.} To facilitate a more compact encoding, partition $[n]$ into consecutive \emph{blocks}. Precisely, let $b\ge1$ (an integer to be chosen later) and let $B=\lceil n/b\rceil$. For $j\in[B]$, the $j$-th block is the interval $I(j)=\{(j-1)b+1,\dots,\min(jb,n)\}$. Note that the same intervals will be used both as blocks of indices and as blocks of values.  For an index $p\in[n]$, let $\alpha_{\mathrm{i}}(p)=\lceil p/b\rceil$ be the index of the block containing $p$, and $\alpha_{\mathrm{v}}(p)=\lceil\tau(p)/b\rceil$ the index of the block containing its value $\tau(p)$.

\paragraph{Profiles.}
A \emph{profile} is a pair $(s,t)$ of nonnegative integer vectors of
length $B$, intended to specify the numbers of selected entries
in each index-block and value-block. A profile is \emph{valid} if $0\le s(j),t(j)\le|I(j)|$ for every $j$, and both $s$ and $t$ sum to $k$.

A set $S\subseteq[n]$ has profile $(s,t)$ if
\[
  s(j)=|S\cap I(j)|,\qquad
  t(j)=|\{q\in S:\tau(q)\in I(j)\}|
  \quad\text{~~for every }j\in[B].
\]
Every subset of $[n]$ of size $k$ has exactly one valid profile, although a
valid profile need not be realized by any set. There are at most
$(b+1)^{2B}$ valid profiles (each of the $B$ index-blocks and $B$ value-blocks has a count between $0$ and $b$). For a valid profile $(s,t)$, let
$\operatorname{occ}_{s,t}(\tau,\pi)$ denote the number of occurrences
of $\pi$ in $\tau$ that have this profile.

\paragraph{Offsets and local counts.}
Let $(s,t)$ be a valid profile. As we pass through the text permutation $\tau$, we keep track of the total counts in earlier blocks, as follows:
\[
  \theta_{\mathrm{i}}(j)=s(1)+\dots+s(j-1),\qquad\theta_{\mathrm{v}}(j)=t(1)+\dots+t(j-1).
\]
We refer to $\theta_{\mathrm{i}}(j)$ and $\theta_{\mathrm{v}}(j)$ as the index- and value offsets of $j$. 

For a set $S$ of occurrence indices, and an index $p\in[n]$, we also need to count the entries that come before $p$, \emph{within} the block of $p$.  We refer to these as the \emph{local counts}, defined as follows:
\[
  \ell_{\mathrm{i}}(p)=|\{q\in S\cap I(\alpha_{\mathrm{i}}(p)):q<p\}|,\qquad \ell_{\mathrm{v}}(p)=|\{q\in S:\tau(q)\in I(\alpha_{\mathrm{v}}(p)),\ \tau(q)<\tau(p)\}|.
\]
Both local counts take values in $\{0,\dots,b-1\}$.

\begin{lemma}\label{lem:decomp}
If $S$ has profile $(s,t)$, then every $p\in S$ satisfies
\[
  r_{\mathrm{i}}(p)=\theta_{\mathrm{i}}(\alpha_{\mathrm{i}}(p))+\ell_{\mathrm{i}}(p)+1,\qquad r_{\mathrm{v}}(p)=\theta_{\mathrm{v}}(\alpha_{\mathrm{v}}(p))+\ell_{\mathrm{v}}(p)+1.
\]
\end{lemma}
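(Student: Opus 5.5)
The plan is to prove the two identities separately by the same counting argument. Each rank is split according to whether the counted elements lie in an earlier block or in the block of $p$ itself. For the index-rank, fix $p\in S$ and let $j=\alpha_{\mathrm{i}}(p)$. The set $\{q\in S: q\le p\}$ is the disjoint union of three parts:
\begin{itemize}
\item[(a)] the elements $q\in S$ with $\alpha_{\mathrm{i}}(q)<j$;
\item[(b)] the elements $q\in S\cap I(j)$ with $q<p$;
\item[(c)] the single element $p$.
\end{itemize}
This decomposition is valid because the blocks $I(1),\dots,I(B)$ are consecutive intervals partitioning $[n]$. Hence $q\le p$ holds exactly when either $q$ lies in a block of smaller number, or $q$ lies in $I(j)$ with $q\le p$. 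Conversely, every element of an earlier block is smaller than every element of $I(j)$.

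The three parts are then counted one at a time. Part (a) has size $\sum_{j'<j}|S\cap I(j')|$. Since $S$ has profile $(s,t)$, this equals $\sum_{j'<j}s(j')=\theta_{\mathrm{i}}(j)$. Part (b) has size $\ell_{\mathrm{i}}(p)$ by definition, and part (c) contributes $1$ because $p\in S$. Summing the three counts gives the first identity.

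The value-rank is handled in the same way, with the order on values replacing the order on indices. Set $j=\alpha_{\mathrm{v}}(p)$. Since $\tau$ is a bijection, $\tau(q)\le\tau(p)$ holds iff either $\lceil\tau(q)/b\rceil<j$, or $\tau(q)\in I(j)$ with $\tau(q)<\tau(p)$, or $q=p$. The first part has size $\sum_{j'<j}t(j')=\theta_{\mathrm{v}}(j)$, using the definition of $t$ in the profile. The second part is $\ell_{\mathrm{v}}(p)$ by definition, and the third contributes $1$.

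There is no real obstacle here. The only point needing care is to check that the three parts are disjoint and cover the whole set in each case. Disjointness is where the strict inequality in the local counts matters, since it keeps $p$ out of part (b). Injectivity of $\tau$ is needed in the value case so that $\tau(q)=\tau(p)$ forces $q=p$.
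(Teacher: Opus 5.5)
Your proposal is correct and follows essentially the same argument as the paper: you split the selected elements up to $p$ into those in earlier blocks (counted by the offset $\theta$), those before $p$ in its own block (the local count $\ell$), and $p$ itself. The paper does this in two sentences and handles values ``by the same argument.'' Your explicit checks of disjointness, coverage, and injectivity of $\tau$ just fill in details the paper leaves implicit.
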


\begin{proof}
The selected indices up to $p$ are the $\theta_{\mathrm{i}}(\alpha_{\mathrm{i}}(p))$ selected indices in earlier blocks, plus the $\ell_{\mathrm{i}}(p)$ selected indices before $p$ in its own block, plus $p$ itself. The same argument for the values gives the second identity.
\end{proof}

For example, take $\tau=3\,1\,4\,2\,5$, $\pi = 2\,3\,1$, $S=\{1,3,4\}$, and $b=2$. The blocks are $\{1,2\}$, $\{3,4\}$, $\{5\}$. The profile of $S$ is $s=(1,2,0)$ and $t=(1,2,0)$. For $p=3$, with value $\tau(3)=4$, Lemma~\ref{lem:decomp} gives $r_{\mathrm{i}}(3)=\theta_{\mathrm{i}}(2)+\ell_{\mathrm{i}}(3)+1=1+0+1=2$ and $r_{\mathrm{v}}(3)=\theta_{\mathrm{v}}(2)+\ell_{\mathrm{v}}(3)+1=1+1+1=3$, in agreement with $\pi(r_{\mathrm{i}}(3)) = \pi(2)=3$.

\paragraph{Lookup tables.}
Lemmas~\ref{lem:ranks} and~\ref{lem:decomp} allow us to formulate the occurrence condition as an equation between the two local counts of each selected index. Fix a valid profile $(s,t)$. For $p\in[n]$ and $\ell \in\{0,\dots,b-1\}$ with $\theta_{\mathrm{i}}(\alpha_{\mathrm{i}}(p))+\ell+1\le k$, let
\[
  g(p,\ell)=\pi\bigl(\theta_{\mathrm{i}}(\alpha_{\mathrm{i}}(p))+\ell+1\bigr)-\theta_{\mathrm{v}}(\alpha_{\mathrm{v}}(p))-1,
\]
and let $D(p)$ be the set of these $\ell$ with $0\le g(p,\ell)\le b-1$. In words: a selected index $p$ with local count $\ell_{\mathrm{i}}(p)=\ell$ agrees with the pattern exactly when its other local count $\ell_{\mathrm{v}}(p)$ equals $g(p,\ell)$.

\begin{lemma}\label{cor:lookup}
Let $S$ have profile $(s,t)$. Then $S$ is an occurrence of $\pi$ if and only if every $p\in S$ satisfies $\ell_{\mathrm{i}}(p)\in D(p)$ and $\ell_{\mathrm{v}}(p)=g(p,\ell_{\mathrm{i}}(p))$.
\end{lemma}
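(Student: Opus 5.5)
The plan is to chain Lemma~\ref{lem:ranks} and Lemma~\ref{lem:decomp}, checking the side conditions in the definition of $D(p)$ separately. First, note that $S$ has profile $(s,t)$ with $(s,t)$ valid, so $|S|=k$. Hence Lemma~\ref{lem:ranks} applies: $S$ is an occurrence of $\pi$ if and only if $\pi(r_{\mathrm{i}}(p))=r_{\mathrm{v}}(p)$ for every $p\in S$. It therefore suffices to show, for each fixed $p\in S$, that
\[
\pi(r_{\mathrm{i}}(p))=r_{\mathrm{v}}(p) \iff \ell_{\mathrm{i}}(p)\in D(p)\ \text{and}\ \ell_{\mathrm{v}}(p)=g(p,\ell_{\mathrm{i}}(p)).
\]

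Next, substitute Lemma~\ref{lem:decomp}, which gives $r_{\mathrm{i}}(p)=\theta_{\mathrm{i}}(\alpha_{\mathrm{i}}(p))+\ell_{\mathrm{i}}(p)+1$ and $r_{\mathrm{v}}(p)=\theta_{\mathrm{v}}(\alpha_{\mathrm{v}}(p))+\ell_{\mathrm{v}}(p)+1$. Write $\ell=\ell_{\mathrm{i}}(p)$. Since $r_{\mathrm{i}}(p)\le |S|=k$, the condition $\theta_{\mathrm{i}}(\alpha_{\mathrm{i}}(p))+\ell+1\le k$ holds automatically, so $g(p,\ell)$ is defined. Moreover $\ell\in\{0,\dots,b-1\}$ because it is a local count. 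Rearranging, the equation $\pi(r_{\mathrm{i}}(p))=r_{\mathrm{v}}(p)$ becomes exactly $\ell_{\mathrm{v}}(p)=g(p,\ell)$.

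It remains to handle the membership $\ell\in D(p)$, which requires $0\le g(p,\ell)\le b-1$. For ($\Rightarrow$): if the equation holds, then $g(p,\ell)=\ell_{\mathrm{v}}(p)\in\{0,\dots,b-1\}$, since local counts lie in this range, so $\ell\in D(p)$. For ($\Leftarrow$): the second condition $\ell_{\mathrm{v}}(p)=g(p,\ell)$ alone gives the equation, so membership in $D(p)$ is a redundant but harmless restriction. Taking the conjunction over all $p\in S$ and applying Lemma~\ref{lem:ranks} completes the argument.

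There is no real obstacle. The only point needing care is checking that $g(p,\ell_{\mathrm{i}}(p))$ is well defined, i.e.\ that the argument of $\pi$ is at most $k$. This follows from $r_{\mathrm{i}}(p)\le k$, which uses that a set with a valid profile has size exactly $k$.
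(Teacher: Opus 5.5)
Your proof is correct and follows essentially the same route as the paper. Both arguments use $|S|=k$ to invoke Lemma~\ref{lem:ranks}, substitute Lemma~\ref{lem:decomp} (with $r_{\mathrm{i}}(p)\le k$ guaranteeing that $g(p,\ell_{\mathrm{i}}(p))$ is defined), and obtain $\ell_{\mathrm{i}}(p)\in D(p)$ from $\ell_{\mathrm{v}}(p)\in\{0,\dots,b-1\}$.
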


\begin{proof}
Since $|S|=s(1)+\dots+s(B)=k$, Lemma~\ref{lem:ranks} applies. By Lemma~\ref{lem:decomp}, $\theta_{\mathrm{i}}(\alpha_{\mathrm{i}}(p))+\ell_{\mathrm{i}}(p)+1=r_{\mathrm{i}}(p)\le k$, and the condition $\pi(r_{\mathrm{i}}(p))=r_{\mathrm{v}}(p)$ is equivalent to $\ell_{\mathrm{v}}(p)=g(p,\ell_{\mathrm{i}}(p))$. Since $\ell_{\mathrm{v}}(p)\in\{0,\dots,b-1\}$, this equality implies that $\ell_{\mathrm{i}}(p)\in D(p)$.
\end{proof}

\subsection{Constructing a formula for each profile}\label{sec:formula}

Fix a valid profile $(s,t)$. We construct a corresponding 3-CNF formula $F(s,t)$ with $\fO(nb)$ variables, so that the number of satisfying assignments $\#F(s,t)$ equals $\mathrm{occ}_{s,t}(\tau,\pi)$. This allows using a \#3SAT solver for counting the number of occurrences for the given profile.  
We proceed in two steps, first constructing a Boolean circuit, then converting it into a 3-CNF formula.

\paragraph{Circuit construction.}

We build a circuit $\Phi$ with one input bit $x_p$ for each index $p \in [n]$, interpreted as $x_p = 1$ if and only if $p \in S$. We define  $L=\lceil\log_2{(b+1)}\rceil$, so that the numbers $0,\dots,b$ fit into $L$ bits. 

We use AND, OR, and XOR gates of fan-in at most
two, with constants and complemented wires allowed. An $L$-bit incrementer adds a bit to an
$L$-bit number using $\fO(L)$ gates (ripple carry). (In)equality of two
$L$-bit numbers is tested using $L$ XOR gates and an OR of the outputs. Larger conjunctions and
disjunctions can be implemented as binary trees, consisting of gates of fan-in two; a conjunction
or disjunction of $q \ge 1$ inputs requires $q-1$ gates.

The circuit has the following four components:
\begin{enumerate}
  \item \textbf{Index counters.} In each block $I(j)$, we go through the indices $p$ in increasing order. The first index $p$ gets $C_{\mathrm{i}}(p)=0$, and every later index $p$ gets $C_{\mathrm{i}}(p)=C_{\mathrm{i}}(p-1)+x_{p-1}$, computed with an incrementer. For the last index $p$ of the block, we test that $C_{\mathrm{i}}(p)+x_p=s(j)$, i.e., that the count of selected indices in the block matches the count in the profile. 
  \item \textbf{Value counters.} In each block $I(j)$, we go through the values $v$ in increasing order and do the same, with counters $C_{\mathrm{v}}(v)$ and the input bit $x_{\tau^{-1}(v)}$ of the index that corresponds to value $v$. For the last value $v$ of the block, we test that $C_{\mathrm{v}}(v)+x_{\tau^{-1}(v)}=t(j)$.
\item \textbf{Lookup at $p$.}
We use a standard \emph{binary decoder circuit} with an $L$-bit input $z=(z_1,\ldots,z_L)$ and a $2^L$-bit output. The $j$-th output bit, for $0 \leq j \leq 2^L-1$, is $1$ if and only if $z$ represents the integer $j$ in binary, so exactly one output bit is $1$.

Such a decoder circuit can be constructed recursively as follows. For a single input bit $z_1$, the two outputs are simply $\neg z_1$ and $z_1$. When one more (less significant) input bit $a$ is appended, a number $j$ becomes $2j+a$. Thus each existing output $d_j$ gives rise to two new outputs,
$$
d'_{2j}=d_j\land\neg a,
\qquad
d'_{2j+1}=d_j\land a.
$$
Extending the decoder from $i-1$ to $i$ bits therefore requires $2^i$ binary AND gates, so the entire circuit uses 
$$
\sum_{i=2}^{L}2^i<2^{L+1}\in O(b)
$$
gates.
At each index $p$, construct one such decoder for $C_{\mathrm{i}}(p)$ and
another for $C_{\mathrm{v}}(\tau(p))$. Denote their relevant outputs by $c_{p,\ell}^{\mathrm{i}}$ and $c_{p,r}^{\mathrm{v}}$, where $c_{p,\ell}^{\mathrm{i}}$ indicates whether $C_{\mathrm{i}}(p)=\ell$, and $c_{p,r}^{\mathrm{v}}$ indicates whether $C_{\mathrm{v}}(\tau(p))=r$.

Compute
\[
  \operatorname{match}(p)
    =\bigvee_{\ell\in D(p)}
       \bigl(c_{p,\ell}^{\mathrm{i}}\land c_{p,g(p,\ell)}^{\mathrm{v}}\bigr),
  \qquad
  \operatorname{ok}(p)
    =\neg x_p\lor\operatorname{match}(p).
\]
The value $\mathrm{ok}(p)$ indicates whether index $p$ is valid within the given occurrence. If $p$ is not selected ($\neg x_p$), then there is nothing to check. Otherwise, $\mathrm{match}(p)$ verifies the condition of Lemma~\ref{cor:lookup}.

The empty disjunction is interpreted as false. Besides the
decoders, this uses at most $b$ binary AND gates, at most $b-1$
binary OR gates, and one final OR gate. Thus the entire lookup
at $p$ uses $\fO(b)$ gates of fan-in at most two.
  \item \textbf{Output.} A final AND of the $2B$ tests from parts 1 and 2 and the $n$ bits $\mathrm{ok}(p)$.
\end{enumerate}

\paragraph{Size.}
Parts 1 and 2 use $\fO(L)$ gates per index and per value,
part 3 uses $\fO(b)$ gates per index, and part 4 uses
$\fO(n+B)=\fO(n)$ gates. Thus
\[
  |\Phi| \in \fO(nL+nb+n) \subseteq \fO(nb).
\]
This count also includes the gates used to replace
larger conjunctions and disjunctions by binary trees.

\begin{lemma}\label{lem:circuit}
$\Phi(x)=1$ if and only if the set $S=\{p:x_p=1\}$ is an occurrence of $\pi$ with profile $(s,t)$.
\end{lemma}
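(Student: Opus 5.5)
The plan is to verify what each component of $\Phi$ computes, and then reduce the statement to Lemma~\ref{cor:lookup}. Throughout, fix an input $x$ and let $S=\{p:x_p=1\}$.

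\textbf{Counters compute local counts.} First I show by induction along each block that the counters equal the local counts. For the index counters, $C_{\mathrm{i}}(p)=\ell_{\mathrm{i}}(p)$ for every $p\in[n]$. The base case is the first index of a block, where the counter is $0$ and no earlier index of the block exists. The inductive step is the recurrence $C_{\mathrm{i}}(p)=C_{\mathrm{i}}(p-1)+x_{p-1}$, which adds exactly the indicator of $p-1\in S$. The same induction over values in increasing order gives $C_{\mathrm{v}}(v)=|\{q\in S:\tau(q)\in I(\lceil v/b\rceil),\ \tau(q)<v\}|$. Hence $C_{\mathrm{v}}(\tau(p))=\ell_{\mathrm{v}}(p)$.

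One technical point is overflow. All counter values are at most $b-1$ before the final addition, and the final tested sums are at most $|I(j)|\le b$, so everything fits in $L$ bits. The incrementers and equality tests are therefore exact. It follows that the $2B$ tests of parts 1 and 2 all pass if and only if $|S\cap I(j)|=s(j)$ and $|\{q\in S:\tau(q)\in I(j)\}|=t(j)$ for all $j$, that is, if and only if $S$ has profile $(s,t)$.

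\textbf{Lookup computes the occurrence condition.} Next I show $\operatorname{match}(p)=1$ if and only if $\ell_{\mathrm{i}}(p)\in D(p)$ and $\ell_{\mathrm{v}}(p)=g(p,\ell_{\mathrm{i}}(p))$. By correctness of the decoder, exactly one output $c^{\mathrm{i}}_{p,\ell}$ is $1$, namely the one with $\ell=C_{\mathrm{i}}(p)=\ell_{\mathrm{i}}(p)$. Likewise $c^{\mathrm{v}}_{p,r}=1$ exactly for $r=\ell_{\mathrm{v}}(p)$.

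So the disjunction over $\ell\in D(p)$ has at most one potentially true term. That term is present iff $\ell_{\mathrm{i}}(p)\in D(p)$, and it is true iff additionally $\ell_{\mathrm{v}}(p)=g(p,\ell_{\mathrm{i}}(p))$. Since $g(p,\ell)\in\{0,\dots,b-1\}$ for $\ell\in D(p)$, the referenced decoder output exists. The decoder's own correctness follows by a short induction on the number of bits, from the recursion $d'_{2j+a}=d_j\wedge[\text{bit}=a]$. Consequently $\operatorname{ok}(p)=1$ iff $p\notin S$ or the condition above holds at $p$.

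\textbf{Combining.} $\Phi(x)=1$ iff $S$ has profile $(s,t)$ and every $p\in S$ satisfies the lookup condition. Given that $S$ has profile $(s,t)$, Lemma~\ref{cor:lookup} says the second clause is equivalent to $S$ being an occurrence of $\pi$, which proves both directions.

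The profile conditions must be established first, before the lookup conditions are interpreted. This is because $D(p)$ and $g$ are defined via offsets $\theta$ derived from $(s,t)$, and they only relate to ranks through Lemma~\ref{lem:decomp} when $S$ actually has that profile. In the direction where $\Phi(x)=1$, the profile is forced by parts 1–2. In the converse direction it is assumed.

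The main obstacle is this careful ordering together with checking the boundary cases. These are: blocks of size less than $b$ (the last block); that $C_{\mathrm{v}}(\tau(p))$ is taken in the value block $\alpha_{\mathrm{v}}(p)$; and empty $D(p)$ giving false. The rest is routine.
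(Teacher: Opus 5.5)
Your proof is correct and follows essentially the same route as the paper: an induction along each block identifies the counters with the local counts, using the no-overflow bound. Decoder correctness then characterizes $\operatorname{match}(p)$, and Lemma~\ref{cor:lookup} finishes the argument under the profile tests; the details you add (the decoder induction and the existence of the referenced decoder outputs) are harmless elaborations of the same steps.
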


\begin{proof}
No count exceeds $b\le 2^L-1$, so the incrementers never
overflow. Induction along each block gives
$C_{\mathrm{i}}(p)=\ell_{\mathrm{i}}(p)$ and $C_{\mathrm{v}}(\tau(p))=\ell_{\mathrm{v}}(p)$. Hence the tests
of parts 1 and 2 hold exactly when $S$ has profile $(s,t)$.

The decoder outputs satisfy
\[
  c_{p,\ell}^{\mathrm{i}}=1 \iff \ell_{\mathrm{i}}(p)=\ell,
  \qquad
  c_{p,r}^{\mathrm{v}}=1 \iff \ell_{\mathrm{v}}(p)=r.
\]
Consequently, $\operatorname{match}(p)=1$ if and only if
$\ell_{\mathrm{i}}(p)\in D(p)$ and
$\ell_{\mathrm{v}}(p)=g(p,\ell_{\mathrm{i}}(p))$. Thus
$\operatorname{ok}(p)=1$ if and only if either $p\notin S$,
or these two conditions hold. Together with the profile tests,
Lemma~\ref{cor:lookup} proves the claim.
\end{proof}

\subsection{From circuit to 3-CNF formula}

The standard Tseitin-transformation turns a circuit into a 3-CNF formula without changing the number of accepted inputs.

\begin{lemma}\label{lem:tseitin}
Let $\Phi$ be a circuit with $n$ inputs and size $g$. A 3-CNF formula $F$ with $n+g$ variables and at most $4g+1$ clauses can be computed in polynomial time such that $\#F=|\{x\in\{0,1\}^n:\Phi(x)=1\}|$.
\end{lemma}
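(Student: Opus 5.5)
The plan is to apply the standard Tseitin transformation and count variables and clauses. Introduce one fresh variable $y_G$ for every gate $G$ of $\Phi$, in addition to the $n$ input variables $x_1,\dots,x_n$; this gives $n+g$ variables. Constants and complemented wires cause no extra variables: a complemented wire is just the negated literal, and a constant input to a gate can be propagated away, since an AND, OR, or XOR of a literal with a constant is again a literal or a constant. Then, for every gate, write clauses that enforce $y_G$ to equal the gate function applied to its (at most two) input literals. For a binary gate $y=a\land b$ the clauses are $(\neg y\lor a)$, $(\neg y\lor b)$, $(y\lor\neg a\lor\neg b)$. For OR they are dual, again three clauses. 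For $y=a\oplus b$ they are the four clauses forbidding the four inconsistent assignments of $(y,a,b)$, e.g. $(\neg y\lor a\lor b)$, $(\neg y\lor\neg a\lor\neg b)$, $(y\lor\neg a\lor b)$, $(y\lor a\lor\neg b)$. A unary or degenerate gate (e.g. the identity, or a gate both of whose inputs are the same wire) needs at most two clauses. Finally, add the unit clause $(y_{\mathrm{out}})$ for the output gate. This gives at most $4g+1$ clauses, each of width at most $3$, and the construction is clearly linear time in the size of $\Phi$.

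For the counting claim, I would show a bijection between accepted inputs $x$ and satisfying assignments of $F$. The gate clauses are satisfied if and only if every $y_G$ equals the value the circuit computes at $G$ on input $x$. To see this, process the gates in topological order and argue by induction: the inputs of $G$ are already forced, so the clauses of $G$ determine $y_G$ uniquely. Hence each $x\in\{0,1\}^n$ has exactly one extension satisfying all gate clauses, namely the one recording the circuit's evaluation. That extension also satisfies the unit clause iff $\Phi(x)=1$. Therefore $\#F=|\{x:\Phi(x)=1\}|$.

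The only mildly delicate point is the bookkeeping that the clause bound $4g+1$ and the variable bound $n+g$ hold exactly, including the handling of constants and negations. The argument above, which absorbs negations into literals and propagates constants, takes care of this. Nothing else is an obstacle; the uniqueness of the extension, via topological order, is the crux ensuring that counts (not just satisfiability) are preserved.
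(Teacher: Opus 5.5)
Your proposal is correct and follows the same route as the paper: the standard Tseitin encoding with one variable per input and per gate, at most four clauses per gate plus a unit clause for the output, and count preservation because each input has a unique satisfying extension to the gate variables. You also write out the per-gate clauses, the handling of constants and negated wires, and the topological-order induction, all of which the paper leaves as ``details omitted.''
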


\begin{proof}
The formula has a variable for each input and each gate. For each gate $z$ it has the clauses stating that the value $z$ equals the gate's function of its two signals. For OR, AND, and XOR gates this is possible with at most four clauses (of size at most $3$) per gate, also handling negated and constant inputs (details omitted). A final singleton clause asserts the output gate. The construction guarantees that for every assignment $x\in\{0,1\}^n$, the gate clauses have exactly one satisfying extension to the gate variables, namely the values computed by the corresponding gates in $\Phi$ on input $x$. This extension satisfies the final clause if and only if $\Phi(x)=1$.
\end{proof}

Applying Lemma~\ref{lem:tseitin} to $\Phi$ gives the formula $F(s,t)$.
It has
\[
  N=n+|\Phi| \in \fO(nb)
\]
variables and $\fO(nb)$ clauses, and by Lemma~2.4 its number
of satisfying assignments is
\[
  \#F(s,t)=\operatorname{occ}_{s,t}(\tau, \pi).
\]

\paragraph{Algorithm.}

The algorithm chooses $b$, builds $F(s,t)$ for every valid profile $(s,t)$, counts its satisfying assignments with the assumed \#3SAT solver, and outputs the sum of the counts. Since every occurrence has exactly one profile,
\[
  \mathrm{occ}(\tau,\pi)=\sum_{(s,t)\text{ valid}}\mathrm{occ}_{s,t}(\tau,\pi)=\sum_{(s,t)\text{ valid}}\#F(s,t),
\]
so the output is correct. Each count is at most $\binom{n}{k}\le2^n$, so each addition takes polynomial time.

\paragraph{Running time.}
There are at most $(b+1)^{2\lceil n/b\rceil}$ valid profiles, and hence calls. Each formula has $n \leq N \leq c_0\cdot n\cdot b$ variables for an absolute constant $c_0$, and polynomially many clauses. Each call takes time at most $2^{c_1 N/f(N)}\cdot\poly(n) \leq 2^{c\,n\,b /f(n)}\cdot\poly(n)$, for an absolute constant $c_1$, and $c=c_0\,c_1$. The total time is $2^{E(n,b)}\cdot\poly(n)$, where
\[
  E(n,b)=2\left\lceil\frac{n}{b}\right\rceil\log(b+1)+\frac{c\,n\,b}{f(n)}.
\]

Take $b=\lceil\sqrt{f(n)\log f(n)}\rceil$.
Then $b\le n$ and $\log(b+1) \in \fO(\log f(n))$.
Both terms of $E(n,b)$ have the same order:
\[
  2\left\lceil\frac{n}{b}\right\rceil\log(b+1)
  \in \fO\left(
      \frac{n\log f(n)}{\sqrt{f(n)\log f(n)}}
    \right)
  \subseteq \fO\left(n\sqrt{\frac{\log f(n)}{f(n)}}\right),
\]
and
\[
  \frac{c\,n\,b}{f(n)}
  \in \fO\left(
      \frac{n\sqrt{f(n)\log f(n)}}{f(n)}
    \right)
  \subseteq \fO\left(n\sqrt{\frac{\log f(n)}{f(n)}}\right).
\]
Since $2\le f(n)\le n$ for all sufficiently large $n$, we have
$n\sqrt{\log f(n)/f(n)} \in \Omega(\log n)$.
Thus the polynomial factor is absorbed into the exponential
bound, and the total running time is
\[
  2^{\fO\left(n\sqrt{\log f(n)/f(n)}\right)}.
\]

\bigskip

\small
\bibliographystyle{alpha}
\bibliography{main}

\end{document}